\theoremstyle{plain}
\newtheorem{corollary}{Corollary}
\newtheorem{lemma}{Lemma}
\newtheorem{proposition}{Proposition}
\newtheorem*{theorem*}{Theorem}
\newtheorem{assumption*}{Assumption}
\declaretheorem[name=Theorem]{thm}
\theoremstyle{definition}
\newtheorem{remark}{Remark}
\newtheorem{definition}{Definition}
\newtheorem{assumption}{Assumption}
\newtheorem{problem}{Problem}
\newtheorem*{problem*}{Problem}
\newcommand{\myset}[1]{\mathcal{#1}}
\DeclareMathOperator*{\argmin}{arg\,min}
\title{ \LARGE \bf
  Safety of Linear Systems under Severe Sensor Attacks
}
\author{Xiao Tan, Pio Ong, Paulo Tabuada,  and Aaron D. Ames%
\thanks{This work is supported by TII under project \#A6847.}
\thanks{Xiao Tan, Pio Ong, and Aaron D. Ames are with the the Department of Mechanical and Civil Engineering, California Institute of Technology, Pasadena, CA 91125, USA (Email: {\tt\small xiaotan, pioong, ames@caltech.edu}).}    
\thanks{Paulo Tabuada is with the Department of Electrical and Computer Engineering at University of California, Los Angeles, CA 90095, USA (Email: {\tt\small  tabuada@ucla.edu}).} 
}
\begin{document}

\maketitle
\thispagestyle{plain} 
\pagestyle{plain}

\begin{abstract}
Cyber-physical systems can be subject to sensor attacks, e.g., sensor spoofing, leading to unsafe behaviors. 
This paper addresses this problem in the context of linear systems when an omniscient attacker can spoof several system sensors at will.  
In this adversarial environment, existing results have derived necessary and sufficient conditions under which the state estimation problem has a unique solution. In this work, we consider a severe attacking scenario when such conditions do not hold. To deal with potential state estimation uncertainty, we derive an exact characterization of the set of all possible state estimates. Using the framework of control barrier functions, we propose design principles for system safety in offline and online phases. For the offline phase, we derive conditions on safe sets for all possible sensor attacks that may be encountered during system deployment.  For the online phase, with past system measurements collected, a quadratic program-based safety filter is proposed to enforce system safety. A $2$D-vehicle example is used to illustrate the theoretical results.
  \end{abstract}

\section{Introduction}

Cyber-physical systems (CPS), as an integration of computation, communication, and physical processes, range from small-scale applications, such as cars, to large-scale infrastructure, such as smart grids and water distribution systems. Because of their close interaction of physical and computational components, CPS are prone to attacks in both cyber and physical domains. Prominent examples of such attacks include the Stuxnet malware\cite{langner2011stuxnet} targeting on a process control system. Previous works also demonstrate that attacks may come from the physical domain \cite{shoukry2013non,tu2018injected}. In \cite{shoukry2013non}, the authors spoof the velocity measurements of a vehicle, which, when intervened by anti-lock braking systems, may cause the drivers to lose control of their vehicles. Motivated by these real-world examples, researchers have explored various attacking and defense strategies for CPS, including, e.g., denial-of-service \cite{de2015input}, replay attacks \cite{zhu2013performance}, man-in-the-middle \cite{smith2015covert}, and false data injection \cite{mo2010false}. 

In this paper, we consider a scenario similar to those in \cite{shoukry2013non,tu2018injected} wherein certain measurements of the CPS are compromised by an attacker. In the setting of linear systems, we adopt a general attack model that imposes no limitations on the magnitude, statistical properties, or temporal evolution requirements on the attack signal. Rather, we only assume an upper bound on the number of attacked sensors. Most existing results in this setting have focused on recovering the system state from compromised measurement data, known as the secure state reconstruction problem. For discrete-time linear systems, \cite{fawzi2014secure} derives several necessary and sufficient conditions for the uniqueness of solutions to this problem, which are further refined in \cite{shoukry2015event}. There, the condition is posed as a sparse observability property of the CPS. An equivalent condition for continuous-time linear systems is given in \cite{chong2015observability}.  Recently, \cite{mao2022computational} shows that finding this unique solution is NP-hard in general. 

An important question naturally arises for CPS in this adversarial scenario: can we ensure \emph{safety} of the system, and thereby avoid catastrophic results through active control, even when certain sensors are compromised? Safety in control systems usually refers to the property that system trajectories can be made to stay within a safe set via feedback. Compromised sensor measurements will negatively affect or mislead the state estimates, and thus complicate safe control design. Motivated by these challenges, \cite{lin2023secondary} and \cite{zhang2022safe} have explored safe control designs in this setting. In~\cite{lin2023secondary}, the sensor attack signals are taken to be bounded, and a set of safeguarded sensors are assumed available for the design of the so-called secondary controller. In \cite{zhang2022safe}, a finite attack pattern is assumed, and each pattern corresponds to a particular subset of compromised sensors. A fault identification scheme is proposed by implementing a large number of extended Kalman filters simultaneously. They further make assumptions on the sensor attacks so that the state estimate error is bounded in probability. The assumptions made by these works may not hold true under the aforementioned sensor attack model that we adopt.

In recent years, control barrier functions (CBF) \cite{Ames2017} have gained popularity as a framework for safety-critical control---providing Lyapunov-like necessary and sufficient conditions for forward set invariance. 
A strength of this approach is its ability to incorporate (bounded) uncertainty, including uncertainty in the input \cite{kolathaya2018input}, 
uncertainty in the state \cite{agrawal2022safe}, 
and measurement \cite{dean2021guaranteeing}. 
There have also been non-deterministic characterizations of safety in the context of risk-adverse CBFs \cite{singletary2022safe,vahs2023belief}, along with stochastic CBFs \cite{Cosner-RSS-23}, where the possible state/perturbation follows a certain stochastic distribution. 
Yet when unbounded, non-stochastic, intelligent  sensor attacks are performed by an omniscient attacker, the state estimation error does not satisfy the assumptions of these works. Note that CBFs have been applied to address other security issues, such as privacy preservation  \cite{zhong2023secure} and safety in the presence of faulty sensors\cite{zhang2022safe}.

In this work, we focus on safety guarantees for CPS subject to general sensor attacks described above. In particular, we consider scenarios where the solution to the secure state reconstruction problem may not be unique. Our contributions are summarized as follow:
\begin{enumerate}
    \item  We provide an exact characterization of the set of possible solutions to the secure state reconstruction problem for linear discrete-time systems.
    \item We outline design principles for safe sets in the offline phase for the worst-case attacking scenario under a mild sparse observability assumption.    
    \item We propose an online safe control scheme that provides safety guarantees in the presence of possibly unbounded state estimation error.
\end{enumerate}
Ultimately, this paper presents a general characterization of the safety of linear systems subject to \emph{severe sensor attacks}. 

\textbf{Notation:} For $ w\in \mathbb{N}$, define $[w]:=\{1,2,...,w\}$. The cardinality of a set $\myset{I}$ is denoted by $|\myset{I}|$. Given a $w\in \mathbb{N}$, a $k$-combination from $[w]$ is a subset of $[w]$ with cardinality $k$. Denote by $\mathbb{C}_{w}^k$ the set of all $k$-combinations from $[w]$. For a matrix $C\in \mathbb{R}^{w\times n}$ and an index set $\Gamma\subseteq[w]$, denote by $C_{\Gamma}$ the matrix obtained from $C$ by removing all the rows with indices not in $\Gamma$. For a point $x\in \mathbb{R}^n$, a set $\myset{X}\subseteq \mathbb{R}^n$, and a matrix $A\in \mathbb{R}^{n\times n}$, we define $\| x\|_{\myset{X}} := \min_v \| x-v\| \textup{ s. t. } v\in \myset{X}$,  and $A(\myset{X}) := \{y\in \mathbb{R}^{n}: y = Ax, \, x\in \myset{X} \} $. Minkowski summation $\myset{X}_1 + \myset{X}_2 $  for two sets $\myset{X}_1, \myset{X}_2$ is defined as $ \{x_1 + x_2, x_1\in \myset{X}_1, x_2 \in \myset{X}_2\}$. When no confusion arises, with slight abuse of notation, we use interchangeably a vector $x\in \mathbb{R}^n$ and the singleton set $\{x\}$.

\section{Problem formulation}

Consider a discrete-time linear system under sensor attacks
\begin{equation} \label{eq:system}
		\begin{aligned}
		\text{(dynamics)} \hspace{0.5cm} & x(\tau+1) = Ax(\tau) + Bu(\tau),\\
			\text{(measurement)} \hspace{0.5cm} & y(\tau)  = Cx(\tau) + e(\tau), \\
			\text{(safe set)} \hspace{0.5cm}   \myset{C} = & \{x\in \mathbb{R}^n: h(x):= Hx + q \geq 0\},
	\end{aligned}
	\end{equation}
where  $x(\tau) \in \mathbb{R}^n,$ $ u(\tau) \in \mathbb{R}^m,$  $ y(\tau)\in \mathbb{R}^p,$ $e(\tau)\in \mathbb{R}^p$, represent the system state, the control input,  the output measurement, and the attacking signal on the sensors, respectively. Regarding the attacks, $e_i(\tau)$ is nonzero whenever a sensor $i\in [p]$ is under attack at time $\tau$.   
A safe set $\myset{C}$ is the set of states that are permitted along the system trajectories.  Here, we specify the set with a vector-valued linear function $ h: \mathbb{R}^n \to \mathbb{R}^l$, representing a polytopic safe set. For notation simplicity, we refer to the input-output data $ (u(0),u(1), \dots, u(t-1),  y(0),y(1), \dots, y(t))\in \mathbb{R}^{mt + p(t+1)}$ collected after receiving the latest measurement and before choosing a control input  at time $t$ as $\mathcal{D}_t$.

We adopt the sensor attack model presented in \cite{fawzi2014secure,shoukry2015event}, and assume the following throughout the paper. 

\begin{assumption}\label{ass:attacking model}
    The attacker has full knowledge of the system, including the states, the dynamics~\eqref{eq:system}, and our defense strategy. It may choose $s$ out of the $p$ sensors to attack, the choice of which remains unchanged for the duration considered. For these sensors, the attacker can adjust the corresponding $e_i(\tau)$ to any value.
\end{assumption}

{This attack model covers a wide range of sensor anomalies, including sensor failures, sensor faults \cite{zhang2021dynamic}, and sensor attacks. During an attack, the attacker intentionally modifies measurements from some sensors to pursue specific objectives, which may lead to the worst-case sensor anomaly in terms of system safety.
Assumption~\ref{ass:attacking model} only restricts the total number of sensor anomalies up to $s$.}

We now introduce the notions of system safety and control barrier functions. Note that we restrict the class of barrier functions to be linear.

\begin{definition}[System safety]
    A system can be rendered safe from time $t$ if there exists an input sequence $\{u(\tau)\}_{\tau\geq t}$ 
    such that 
     the state $x(\tau) $ remains in the safe set $\myset{C}$ for all $\tau  \geq t$.
\end{definition}
\begin{definition}[Control barrier function\cite{agrawal2017discrete}]
    The function $h: x\mapsto Hx + q$ with $H\in \mathbb{R}^{l\times n}$ and $ q \in \mathbb{R}^l$ is called a \emph{(discrete-time) control barrier function} for system \eqref{eq:system} if, for some $\gamma\in (0,1)$, for each $x\in \mathbb{R}^n$, there exists an input $u\in \mathbb{R}^m$ such that 
    \begin{equation} \label{eq:cbf_condition}
        H(Ax + Bu) + q  \geq (1-\gamma) (Hx + q). 
    \end{equation}
\end{definition}
It has been shown in \cite{Ames2017,agrawal2017discrete} that if   $u(\tau)$ is chosen to satisfy the condition in \eqref{eq:cbf_condition} along system trajectories, then the safe set $\myset{C}$ is forward invariant and asymptotically stable. Hereafter we refer to \eqref{eq:cbf_condition} as the CBF condition.

When system sensors are under attack, one immediate question is ``can a good state estimate be retained?" {Since our system is known,  the state estimation problem can be reduced to identification of all plausible initial states.} Related to this are the following notions. 
\begin{definition}[$r$-sparse observability]
    System \eqref{eq:system} is \emph{$r$-sparse observable} if the pair $(A, C_{\Gamma})$ is observable for any index set $\Gamma\in \mathbb{C}_{p}^{p-r}$.
\end{definition}

\begin{definition}[Plausible initial states] \label{def:plausible state}
     Given input-output data $\mathcal{D}_t$, we call  $x_0$ \emph{a plausible initial state} if there exists $e(\tau), \tau 
 = 0,1,...,t,$ satisfying Assumption \ref{ass:attacking model} such that the first two  equations in \eqref{eq:system} hold with $x(0) = x_0$.  We denote the set of all plausible initial states at time $t$ by $\myset{X}_t^0$. 
 \end{definition}

It is known that there exists a unique solution to the secure state reconstruction problem, i.e., the set of all plausible initial state $\myset{X}_t^0$ is a singleton, if and only if the system is $2s$-sparse observable \cite{shoukry2015event}. In this work, we investigate the case beyond the $2s$-sparse observability condition, hereafter referred to as the  \emph{severe sensor attacks case}. In this case, the plausible initial states may not be unique. As will be shown later, this ambiguity complicates system safety analysis. 

In this work, we consider system safety guarantees under severe sensor attacks both in offline and online phases. For the offline phase, we characterize the set of  plausible states under the worst-case sensor attack, and identify sets in the state space for which the system can be rendered safe. For the online phase, where measurements are collected for the first $t$ steps, we derive control barrier function-based conditions that guarantee system safety from time $t$. This is formally stated as follows.

\begin{figure*}[!b]
    \normalsize
\begin{equation} \label{eq:matrices} \tag{5}
\widetilde{Y}_i = \begin{bsmallmatrix}
        y_i(0) \\
        y_i(1) \\
        \vdots \\
        y_i(t)
    \end{bsmallmatrix},~
    E_i = \begin{bsmallmatrix}
        e_i(0) \\
        e_i(1) \\
        \vdots \\
        e_i(t)
    \end{bsmallmatrix},~
    U = \begin{bsmallmatrix}
        u(0) \\
        u(1) \\
        \vdots\\
        u(t-1) \\
        0
    \end{bsmallmatrix},~
    \mathcal{O}_i = \begin{bsmallmatrix}
        C_i \\
        C_i A \\
        \vdots \\
        C_i A^{t}
    \end{bsmallmatrix},~
    F_i = \begin{bsmallmatrix}
        0 & 0 & \cdots &  0 & 0 \\
        C_iB & 0 & \cdots & 0 & 0  \\
        \vdots & & \ddots & & 0 \\
        C_iA^{t-1}B &  C_iA^{t-2}B & \cdots & C_i B & 0
    \end{bsmallmatrix}
\end{equation}
\end{figure*}
\begin{problem}[Worst-case sensor attack]
    Given system matrices $A, B, C$, and the number of sensor attacks $s$, derive conditions on $H$ and $ q$ such that the system can be rendered safe under all possible sensor attacks.
\end{problem}

\begin{problem}[Fixed yet unknown sensor attack]
    Given system matrices $A, B, C$, the number of sensor attacks  $s$, and the input-output data $\mathcal{D}_t $, $t\geq n-1$, derive conditions on $H, q,$ and the input sequence $ \{u(\tau)\}_{\tau \geq t}$ so that the system is safe. 
\end{problem}

 \section{Secure state estimation and safe control}
 In this section, we present our main results on system safety under sensor attacks. We approach this problem in two steps: 1) secure state estimation and 2) safe control design. First, we characterize the set of all plausible states mathematically. We then derive conditions that guarantee all state trajectories emanating from the set of plausible initial states  $\myset{X}_t^0$ remain within the safe set. We highlight that the set $\myset{X}_t^0$ may contain multiple, possibly infinite elements.

 \subsection{Characterization of plausible states}

 In this section, we only consider the case $t\geq n-1$. When $t<n-1$, some states may become plausible states due to the lack of measurements, which is not the focus of this work.

 Following the derivations in \cite{shoukry2015event}, by stacking the measurement history per sensor, we can rewrite the equations in \eqref{eq:system} in a compact form as follows: 
\begin{equation}\label{eq:measurement_matrix_eq}
    Y_i = \mathcal{O}_i x_0 + E_i,~i = 1,2,...,p,
\end{equation}
 where the matrices are defined in \eqref{eq:matrices} at the bottom of this page, and $Y_i = \widetilde{Y}_i - F_i U$. We establish the following result on the set $\myset{X}_t^0$.
\begin{proposition} \label{prop:state uncertainty X1} 
Given system \eqref{eq:system}, the maximal number of attacked sensors $s$, and the input-output data $ \mathcal{D}_t$, we have:
    \begin{equation} \label{eq:X0set}
        \myset{X}_t^0 = \bigcup_{\forall \  \{i_1, ..., i_{p-s}\} \in \mathbb{C}_{p}^{p-s}} 
 \left( \myset{X}_t^{0,i_1} \cap \myset{X}_t^{0,i_2}\cap ... \cap \myset{X}_t^{0,i_{p-s}} \right)
    \end{equation}
  where   $\myset{X}_t^{0,i} = \{x\in \mathbb{R}^n: \mathcal{O}_i x = Y_i\}$ for $i \in [p]$.
\end{proposition}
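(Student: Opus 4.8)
The plan is to establish the set equality \eqref{eq:X0set} by proving the two inclusions separately, leaning on the per-sensor relation \eqref{eq:measurement_matrix_eq}, $Y_i = \mathcal{O}_i x_0 + E_i$, together with the key observation that, under Assumption~\ref{ass:attacking model}, a sensor $i$ is \emph{clean} (unattacked) precisely when $E_i = 0$, which by \eqref{eq:measurement_matrix_eq} is equivalent to $\mathcal{O}_i x_0 = Y_i$, i.e. $x_0 \in \myset{X}_t^{0,i}$.

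First I would prove $\myset{X}_t^0 \subseteq (\text{RHS})$. Take any $x_0 \in \myset{X}_t^0$. By Definition~\ref{def:plausible state} there exist attack signals $\{e(\tau)\}_{\tau=0}^t$ satisfying Assumption~\ref{ass:attacking model} consistent with $x_0$ and $\mathcal{D}_t$. Let $\mathcal{A} \subseteq [p]$ be the time-invariant set of attacked sensors, so $|\mathcal{A}| \le s$. For every clean sensor $i \in [p]\setminus\mathcal{A}$ we have $e_i(\tau) = 0$ for all $\tau$, hence $E_i = 0$ and $Y_i = \mathcal{O}_i x_0$, i.e. $x_0 \in \myset{X}_t^{0,i}$. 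Since $|[p]\setminus\mathcal{A}| \ge p-s$, I may pick any $(p-s)$-combination $\Gamma \subseteq [p]\setminus\mathcal{A}$; then $x_0 \in \bigcap_{i\in\Gamma}\myset{X}_t^{0,i}$, which is one of the terms in the union, giving the inclusion.

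For the reverse inclusion $\supseteq$, take $x_0$ in the union, so there is a combination $\Gamma = \{i_1,\dots,i_{p-s}\}\in\mathbb{C}_p^{p-s}$ with $\mathcal{O}_i x_0 = Y_i$ for all $i\in\Gamma$. I would exhibit a witnessing attack: propagate the state from $x_0$ under the known inputs $u(0),\dots,u(t-1)$ to obtain $x(\tau)$, and set $e_i(\tau) := y_i(\tau) - C_i x(\tau)$ for $i \notin \Gamma$ while $e_i(\tau) := 0$ for $i\in\Gamma$. The dynamics and measurement equations of \eqref{eq:system} then hold by construction, and $\mathcal{O}_i x_0 = Y_i$ forces $E_i = 0$ for $i\in\Gamma$, making that choice consistent. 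The possibly attacked sensors lie in $[p]\setminus\Gamma$, whose cardinality is exactly $s$, and this set is fixed over time, so Assumption~\ref{ass:attacking model} is met; hence $x_0\in\myset{X}_t^0$.

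The main point requiring care is the exact reading of Assumption~\ref{ass:attacking model}: the attacked set is fixed across all $\tau$, and it is this time-invariance that collapses ``sensor $i$ is clean'' into the single algebraic condition $E_i = 0$ (equivalently $x_0 \in \myset{X}_t^{0,i}$) rather than a family of time-indexed constraints. The only remaining subtlety is the cardinality bookkeeping in both directions—guaranteeing that a clean $(p-s)$-subset exists whenever at most $s$ sensors are attacked, and conversely that labelling all of $[p]\setminus\Gamma$ as attacked respects the budget $s$. I emphasize that no observability hypothesis is needed for this characterization; $2s$-sparse observability would enter only when asking whether the union \eqref{eq:X0set} reduces to a singleton.
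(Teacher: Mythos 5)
Your proposal is correct and follows essentially the same two-inclusion argument as the paper: for one direction you identify a clean $(p-s)$-subset $\Gamma$ of sensors forcing $\mathcal{O}_i x_0 = Y_i$ on $\Gamma$, and for the other you construct a witnessing attack by setting $E_j = Y_j - \mathcal{O}_j x_0$ off $\Gamma$ and zero on $\Gamma$. The only cosmetic difference is that the paper phrases the forward inclusion as a proof by contradiction while you argue it directly.
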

\begin{proof}
    First we show every element {from the right hand side of \eqref{eq:X0set}} is a plausible initial state under $s$ sensor attacks. Consider any $x_0$ {from the union}. The state $x_0$ must belong to one of the sets in the union. That is, there exists a set of $p-s$ indices $\Gamma =\{i_1, ..., i_{p-s}\}\in \mathbb{C}_{p}^{p-s}$ such that  $\mathcal{O}_i x_0 = Y_i$ for $i\in \Gamma$. Then, because $|\Gamma|=p-s $, there are precisely $s$  sensor indices left, $[p]\setminus\Gamma$. We can choose $E_j = Y_j - \mathcal{O}_j x,~\forall j\in [p]\setminus\Gamma$ so that \eqref{eq:measurement_matrix_eq} holds for each index $j$. Thus, there exists an attacking signal for which $E_i=0$ for $p-s$ indices and nonzero for $s$ indices, thereby satisfying Assumption \ref{ass:attacking model}, such that  \eqref{eq:measurement_matrix_eq} holds, and hence $x_0$ is a plausible initial state. {That is, $x_0 \in \myset{X}_t^0$}.

    Now we show that no state other than those in {the union} in \eqref{eq:X0set} is a plausible initial state with the measurements. Suppose there exists a plausible state {$x_0 \in \myset{X}_t^0$ but not from the union}. Then {because there can only be $s$ attacked sensors}, there must exist a set of indices $\Gamma\in \mathbb{C}_{p}^{p-s}$ where $E_i=0$. As a result, we have $\mathcal{O}_i x_0 = Y_i, i \in \Gamma$ from \eqref{eq:measurement_matrix_eq}. This, however, suggests that $x_0$ belongs to one of the sets in the union in \eqref{eq:X0set}, which is a contradiction.
\end{proof}

Following \eqref{eq:X0set}, one observes that, in general, the set of plausible initial states $\myset{X}_t^{0}$ is a union of affine subspaces. Based on this result, we further derive the set of plausible states at time $t$ by forward propagating $\myset{X}_t^{0}$ through the system~\eqref{eq:system}.
To ease the notation, for a combination $\Gamma = \{i_1, i_2, ..., i_{p-s}\}\in \mathbb{C}_{p}^{p-s}$, let $\myset{X}_{t}^{0,\Gamma} :=\myset{X}_t^{0,i_1} \cap \myset{X}_t^{0,i_2}\cap ... \cap \myset{X}_t^{0,i_{p-s}} = \{x: \mathcal{O}_{\Gamma}x = Y_{\Gamma} \}$, where 
\begin{equation} \label{eq:def_O_Gamma}
 \setcounter{equation}{\value{equation}+1}
    \mathcal{O}_{\Gamma} := \begin{bsmallmatrix}
        \mathcal{O}_{i_1} \\
        \mathcal{O}_{i_2} \\
        \vdots \\
        \mathcal{O}_{i_{p-s}}
    \end{bsmallmatrix},~Y_{\Gamma} := \begin{bsmallmatrix}
        Y_{i_1} \\
        Y_{i_2} \\
        \vdots \\
        Y_{i_{p-s}}
    \end{bsmallmatrix}.
\end{equation} 
Furthermore, we use the notation $\myset{X}^{\tau}_t$ and $\myset{X}^{\tau,\Gamma}_t$  for the sets of all plausible states at time $\tau$, starting from $\myset{X}^{0}_t$ and $\myset{X}^{0,\Gamma}_t$, respectively. We will omit the subscript $t$ when it is clear the measurements considered are from the first $t$ steps. We now characterize the set $\myset{X}^{t}_t$.
\begin{corollary} \label{col:X_t set}
Under the premises of Proposition \ref{prop:state uncertainty X1}, we have 
\begin{equation}
     \myset{X}^{t} = \bigcup_{\forall \Gamma \in \mathbb{C}_{p}^{p-s}}   \myset{X}^{t,\Gamma} 
\end{equation}
    where, for each $\Gamma\in\mathbb{C}_{p}^{p-s}$,
\begin{multline}
    \label{eq:X_t_Gamma}
    \myset{X}^{t,\Gamma} = A^t( \myset{X}^{0,\Gamma}) + A^{t-1}Bu(0) +  ... + Bu(t-1).
\end{multline}

\end{corollary}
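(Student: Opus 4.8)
The plan is to recognize $\myset{X}^{t}$ as the image of $\myset{X}^{0}$ under the $t$-step state-transition map and then push that map through the union established in Proposition~\ref{prop:state uncertainty X1}. First I would make the transition map explicit. Iterating the dynamics $x(\tau+1)=Ax(\tau)+Bu(\tau)$ from $\tau=0$ to $\tau=t$ yields the closed-form solution
\[
x(t)=A^{t}x(0)+\sum_{k=0}^{t-1}A^{t-1-k}Bu(k),
\]
so that, with the known inputs $u(0),\dots,u(t-1)$ contained in $\mathcal{D}_t$, the affine map $\Phi(x_0):=A^{t}x_0+A^{t-1}Bu(0)+\dots+Bu(t-1)$ sends any candidate initial state to the corresponding state at time $t$. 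Since $\myset{X}^{t}$ is defined as the forward propagation of $\myset{X}^{0}$ through \eqref{eq:system} and the dynamics and inputs are deterministic, each plausible initial state maps to exactly one state at time $t$; hence $\myset{X}^{t}=\Phi(\myset{X}^{0})$ essentially by definition.

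Next I would substitute the characterization $\myset{X}^{0}=\bigcup_{\Gamma\in\mathbb{C}_{p}^{p-s}}\myset{X}^{0,\Gamma}$ from Proposition~\ref{prop:state uncertainty X1} and invoke the elementary fact that the image of a union under any map is the union of the images, $\Phi\!\left(\bigcup_{\Gamma}\myset{X}^{0,\Gamma}\right)=\bigcup_{\Gamma}\Phi(\myset{X}^{0,\Gamma})$. Note that no invertibility of $A$ is required here, as this holds for arbitrary (possibly non-injective) maps; a singular $A$ only collapses some initial states onto a common image, which does not affect the set equality. It then remains to identify each $\Phi(\myset{X}^{0,\Gamma})$ with the expression in \eqref{eq:X_t_Gamma}. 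Because the input-dependent term $A^{t-1}Bu(0)+\dots+Bu(t-1)$ is a fixed vector—a singleton under the stated abuse of notation—applying $\Phi$ to the affine subspace $\myset{X}^{0,\Gamma}$ amounts to the matrix image $A^{t}(\myset{X}^{0,\Gamma})$ translated by that vector, which is exactly $\myset{X}^{t,\Gamma}$ written in the image and Minkowski-sum notation from the Notation paragraph.

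The step requiring the most care is bookkeeping rather than conceptual: confirming that applying the affine map to a set agrees with the image-plus-translation notation, and that the index set $\mathbb{C}_{p}^{p-s}$ is preserved verbatim so that no combinations are merged or dropped in passing from $\myset{X}^{0}$ to $\myset{X}^{t}$. Since $\Phi$ is affine with a constant translation and the union is distributed unchanged, I expect the argument to be short, with no hidden difficulty arising from $A$ being singular or from the intersections defining $\myset{X}^{0,\Gamma}$ collapsing.
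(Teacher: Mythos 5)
Your proposal is correct and matches the paper's argument: the paper likewise obtains the result by forward-propagating each $\myset{X}^{0,\Gamma}$ through the dynamics (iterating $\myset{X}_t^{\tau+1,\Gamma} = A(\myset{X}_t^{\tau,\Gamma}) + Bu(\tau)$, which is exactly your closed-form affine map $\Phi$), and distributing the map over the union from Proposition~\ref{prop:state uncertainty X1}. Your version simply spells out the image-of-a-union step and the Minkowski-sum bookkeeping that the paper leaves implicit.
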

\begin{proof}
     The result follows from a direct calculation of the system dynamics 
$
    \myset{X}_t^{\tau+1,\Gamma} = A (\myset{X}_t^{\tau,\Gamma}) + B u(\tau)
$ for all $\tau \in \{0,1,...,t-1\}$.
\end{proof}

\subsection{Offline phase safe set design}
In this subsection, we establish  conditions on $(A, B, $ $C, s, H, q)$ that provide safety guarantees under all possible attacking scenarios. We start by  characterizing $\myset{X}^0$ for all possible input-output data generated by the system under a certain sparse observability condition.

\begin{proposition} \label{prop:general sparse case}
    Consider system \eqref{eq:system} with $s$ sensors under attack. Let $x_{\textup{true}}$ denote the true but unknown initial state. If the system is $s$-sparse observable, then for any attack signal $\{e(\tau)\}_{0\leq\tau\leq t}$ assigned by the attacker,
    the set of all plausible initial states $\myset{X}^0$ is a finite set. Moreover, when $p > 2s$, the set of plausible initial states $\myset{X}^0$ satisfies
    \begin{equation} \label{eq:general X0}
        \myset{X}^0 \subset \{x_{\textup{true}} \} + \bigcup_{\Lambda\in \mathbb{C}_{p}^{p-2s}} \ker(\mathcal{O}_{\Lambda}).
    \end{equation}
\end{proposition}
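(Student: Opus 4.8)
The plan is to build directly on Proposition~\ref{prop:state uncertainty X1}, which already expresses $\myset{X}^0 = \bigcup_{\Gamma \in \mathbb{C}_{p}^{p-s}} \myset{X}^{0,\Gamma}$ with $\myset{X}^{0,\Gamma} = \{x : \mathcal{O}_\Gamma x = Y_\Gamma\}$. The whole argument then reduces to two ingredients: a rank fact that controls each set in the union, and a counting fact about how index sets overlap.

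For finiteness, I would invoke $s$-sparse observability. Since $t \geq n-1$, for every $\Gamma \in \mathbb{C}_{p}^{p-s}$ the matrix $\mathcal{O}_\Gamma$ contains (up to a row reordering) the rows of the observability matrix of the pair $(A, C_\Gamma)$, so observability of $(A, C_\Gamma)$ forces $\mathcal{O}_\Gamma$ to have full column rank $n$. Consequently the linear system $\mathcal{O}_\Gamma x = Y_\Gamma$ has at most one solution, so each $\myset{X}^{0,\Gamma}$ is either empty or a singleton. As the union runs over the finite family $\mathbb{C}_{p}^{p-s}$ (of cardinality $\binom{p}{s}$), the set $\myset{X}^0$ has at most $\binom{p}{s}$ elements and is therefore finite.

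For the inclusion, I would fix a size-$(p-s)$ subset $\Gamma_{\textup{true}} \in \mathbb{C}_{p}^{p-s}$ of the uncompromised sensors, for which $E_i = 0$ and hence $Y_i = \mathcal{O}_i x_{\textup{true}}$. Given any $x_0 \in \myset{X}^0$, Proposition~\ref{prop:state uncertainty X1} supplies some $\Gamma \in \mathbb{C}_{p}^{p-s}$ with $\mathcal{O}_i x_0 = Y_i$ for all $i \in \Gamma$. Setting $\Lambda := \Gamma \cap \Gamma_{\textup{true}}$, inclusion–exclusion gives $|\Lambda| \geq 2(p-s) - p = p - 2s$, which is positive precisely because $p > 2s$. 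On $\Lambda$ both $x_0$ and $x_{\textup{true}}$ reproduce the same measurements, so $\mathcal{O}_i(x_0 - x_{\textup{true}}) = 0$ for all $i \in \Lambda$, i.e. $x_0 - x_{\textup{true}} \in \ker(\mathcal{O}_\Lambda)$.

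Finally I would pass from the variable-size overlap $\Lambda$ to a fixed-size index set. Choosing any $\Lambda' \subseteq \Lambda$ with $|\Lambda'| = p - 2s$, the matrix $\mathcal{O}_{\Lambda'}$ consists of a subset of the rows of $\mathcal{O}_\Lambda$, so $\ker(\mathcal{O}_\Lambda) \subseteq \ker(\mathcal{O}_{\Lambda'})$; hence $x_0 - x_{\textup{true}}$ lies in $\ker(\mathcal{O}_{\Lambda'})$ for some $\Lambda' \in \mathbb{C}_{p}^{p-2s}$, which yields the claimed inclusion. The only subtle points—hardly obstacles—are making the counting bound $|\Gamma \cap \Gamma_{\textup{true}}| \geq p-2s$ explicit and noting the anti-monotonicity of kernels under shrinking the row index set; everything else is bookkeeping on top of Proposition~\ref{prop:state uncertainty X1}.
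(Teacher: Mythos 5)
Your proof is correct and follows essentially the same route as the paper's: full column rank of $\mathcal{O}_\Gamma$ from $s$-sparse observability (valid since $t\geq n-1$) gives finiteness, and the overlap bound $|\Gamma\cap\Gamma_{\textup{true}}|\geq 2(p-s)-p=p-2s$ together with anti-monotonicity of $\ker(\mathcal{O}_\Lambda)$ under shrinking $\Lambda$ gives the inclusion. Anchoring $x_{\textup{true}}$ to the explicit uncompromised index set $\Gamma_{\textup{true}}$ and then passing to a fixed-size subset $\Lambda'\in\mathbb{C}_p^{p-2s}$ is slightly cleaner bookkeeping than the paper's appeal to two distinct combinations and nested unions, but the substance is identical.
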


    \begin{proof}
    Since $(A,C)$ is $s$-sparse observable,
    $\mathcal{O}_\Gamma$ is full column rank for any $\Gamma \in \mathbb{C}_p^{p-s}$. Therefore, each set $\myset{X}^{0,\Gamma}$
    is either a singleton or an empty set. From \eqref{eq:X0set}, $\myset{X}^0 = \bigcup_{\Gamma \in \mathbb{C}_p^{p-s}} \myset{X}^{0,\Gamma}$ is a union of finite sets, so it is finite.

Now consider the case $p> 2s$.  Because we can write $x = x_{\textup{true}} +(x-x_{\textup{true}})$, we will show that $x-x_{\textup{true}}\in \bigcup_{\Lambda\in \mathbb{C}_{p}^{p-2s}} \ker(\mathcal{O}_{\Lambda})$ for any $x\in \myset{X}^0$. Since $\myset{X}^{0,\Gamma}$ is either an empty set or a singleton, there exist two distinct combinations $\Gamma_0,\Gamma\in\mathbb{C}_{p}^{p-s}$ with  $\mathcal{O}_{\Gamma_0} x_{\textup{true}} = Y_{\Gamma_0} $ and $\mathcal{O}_{\Gamma} x = Y_{\Gamma}$. We note the following for $\Gamma_0, \Gamma$ from $\mathbb{C}_{p}^{p-s}$:
\begin{equation}\label{eq:combination_bounds}
   p-2s\leq | \Gamma_0 \cap \Gamma| \leq p-s-1
\end{equation} 
because the two sets each have $p-s$ elements out of $p$ total elements, and they must also be distinct. The lower bound implies the intersection $\Gamma_0\cap \Gamma \neq \emptyset$ is nonempty, so $\mathcal{O}_{\Gamma_0\cap\Gamma}$ and $Y_{\Gamma_0\cap\Gamma}$ are well-defined. Then, the equations
$
\mathcal{O}_{\Gamma_0\cap\Gamma}x_{\textup{true}} = Y_{\Gamma_0\cap\Gamma},$ and $ \mathcal{O}_{\Gamma_0\cap\Gamma}x = Y_{\Gamma_0\cap\Gamma}$ are nontrivial.
Subtracting these two equations yields:
\begin{equation}
    x - x_{\textup{true}} \in \ker(\mathcal{O}_{\Gamma_0 \cap \Gamma})\subseteq \bigcup_{\Gamma_1,\Gamma_2\in \mathbb{C}_{p}^{p-s}}\ker(\mathcal{O}_{\Gamma_1 \cap \Gamma_2}).
\end{equation}
From \eqref{eq:combination_bounds},
we only need to consider $\Lambda = \Gamma_1\cap \Gamma_2$ such that $p-2s\leq|\Lambda|\leq p-(s+1)$:
\begin{equation*}
     x - x_{\textup{true}} \in \bigcup_{k = p-2s}^{p-(s+1)} \left(\bigcup_{\Lambda\in \mathbb{C}_{p}^{k}} \ker(\mathcal{O}_{\Lambda})\right).
\end{equation*}

One can verify that 
for any combinations $\bar{\Lambda}, \Lambda \subseteq [p]$, if $\Lambda \subseteq \bar{\Lambda}$, then $    \ker(\mathcal{O}_{\bar{\Lambda}})\subseteq \ker(\mathcal{O}_{\Lambda}) $. Thus, we know $\bigcup_{\Lambda\in \mathbb{C}_{p}^{p-(s+1)}} \ker(\mathcal{O}_{\Lambda}) \subseteq  ... \subseteq \bigcup_{\Lambda\in \mathbb{C}_{p}^{p-2s}} \ker(\mathcal{O}_{\Lambda})$. This further implies that \eqref{eq:general X0} holds. \end{proof}
Proposition \ref{prop:general sparse case} characterizes the set~$\myset{X}^0$ when there are sensor redundancies. The result shows that even under severe sensor attacks, $s$-sparse observability still ensures a finite number of plausible states, and with enough sensors, the attackers can only confuse the system with plausible initial states contained within the set given by~\eqref{eq:general X0}. We note that the existing result on $2s$-sparse observability follows directly from our characterization.
\begin{corollary}[\cite{shoukry2015event}]
    When system \eqref{eq:system} is $2s$-sparse observable,
    we have $\myset{X}^0 = \{x_{\textup{true}}\}$. 
\end{corollary}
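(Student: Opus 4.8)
The plan is to obtain this statement as a direct specialization of Proposition~\ref{prop:general sparse case}, so essentially no new machinery is needed. First I would record that $2s$-sparse observability is strictly stronger than the $s$-sparse observability hypothesis of that proposition: observability of $(A, C_\Gamma)$ is preserved when $\Gamma$ is enlarged (adding rows to an observability matrix cannot decrease its rank), and every $\Gamma \in \mathbb{C}_p^{p-s}$ contains some $\Lambda \in \mathbb{C}_p^{p-2s}$, so observability for all $(p-2s)$-combinations forces observability for all $(p-s)$-combinations. Hence Proposition~\ref{prop:general sparse case} applies and $\myset{X}^0$ is finite. I would also note that, for a nontrivial system ($n \geq 1$), requiring $(A, C_\Lambda)$ to be observable with $|\Lambda| = p-2s$ forces $p-2s \geq 1$, i.e.\ $p > 2s$ (an empty sensor set can never yield an observable pair). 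Consequently the refined inclusion~\eqref{eq:general X0} is available.

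Next I would exploit~\eqref{eq:general X0} directly. For each $\Lambda \in \mathbb{C}_p^{p-2s}$ the pair $(A, C_\Lambda)$ is observable by hypothesis, so its observability matrix $\mathcal{O}_\Lambda$ has full column rank $n$ and therefore $\ker(\mathcal{O}_\Lambda) = \{0\}$. The union $\bigcup_{\Lambda \in \mathbb{C}_p^{p-2s}} \ker(\mathcal{O}_\Lambda)$ thus collapses to $\{0\}$, and~\eqref{eq:general X0} reduces to $\myset{X}^0 \subset \{x_{\textup{true}}\} + \{0\} = \{x_{\textup{true}}\}$.

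Finally, to promote this inclusion to an equality I would observe that $x_{\textup{true}}$ is always a plausible initial state: the data $\mathcal{D}_t$ was generated along the true trajectory together with the actual attack signal, which by assumption satisfies Assumption~\ref{ass:attacking model}, so Definition~\ref{def:plausible state} gives $x_{\textup{true}} \in \myset{X}^0$. Combining $\myset{X}^0 \subset \{x_{\textup{true}}\}$ with $x_{\textup{true}} \in \myset{X}^0$ yields $\myset{X}^0 = \{x_{\textup{true}}\}$.

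I do not expect a genuine obstacle here, since the heavy lifting is already done in Proposition~\ref{prop:general sparse case}. The only points needing care are the bookkeeping that $2s$-sparse observability indeed entails $p > 2s$ (so that the $p > 2s$ branch of the proposition is legitimately invoked rather than assumed), and the small but essential observation that $\myset{X}^0$ is nonempty because it contains $x_{\textup{true}}$, which is what converts the set inclusion into the claimed equality.
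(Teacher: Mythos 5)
Your proposal is correct and follows essentially the same route as the paper: invoke Proposition~\ref{prop:general sparse case}, note that $2s$-sparse observability forces $p>2s$, and observe that observability of every $(A,C_\Lambda)$ with $\Lambda\in\mathbb{C}_p^{p-2s}$ makes each $\ker(\mathcal{O}_\Lambda)=\{0\}$ so that \eqref{eq:general X0} collapses to a singleton. The extra bookkeeping you supply (that $2s$-sparse observability implies $s$-sparse observability, and that $x_{\textup{true}}\in\myset{X}^0$ upgrades the inclusion to an equality) is left implicit in the paper but is correct and worth stating.
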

\begin{proof}
    $2s$-sparse observability implies $p>2s$.   The result follows from~\eqref{eq:general X0} since $ \ker(\mathcal{O}_{\Lambda}) = \{0\},~\forall \Lambda\in \mathbb{C}_{p}^{p-2s}$.
\end{proof}
\begin{corollary}
    When system \eqref{eq:system} is not $s$-sparse observable, there exists some possible attack signal $\{e(\tau)\}_{0\leq\tau\leq t}$ assigned by the attacker such that $\myset{X}^0$ is an infinite set.
\end{corollary}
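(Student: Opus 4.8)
The plan is to exhibit an explicit attack that leaves an entire affine subspace of initial states indistinguishable from the truth. Since the system fails to be $s$-sparse observable, by definition there exists a combination $\Gamma^* \in \mathbb{C}_{p}^{p-s}$ for which the pair $(A, C_{\Gamma^*})$ is not observable. Because we work with $t \geq n-1$, the Cayley--Hamilton theorem guarantees that the stacked matrix $\mathcal{O}_{\Gamma^*}$, which extends up to $A^{t}$, has the same kernel as the standard observability matrix truncated at $A^{n-1}$; hence non-observability of $(A, C_{\Gamma^*})$ is equivalent to $\ker(\mathcal{O}_{\Gamma^*}) \neq \{0\}$. This nonzero kernel is the source of the infinitely many plausible states.

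Next I would choose the attack signal to be identically zero, $e(\tau)\equiv 0$, which trivially satisfies Assumption~\ref{ass:attacking model} since it corrupts zero (hence at most $s$) sensors. Under this choice the measurements reduce to $Y_i = \mathcal{O}_i x_{\textup{true}}$ for every $i$, so in particular $x_{\textup{true}}$ solves $\mathcal{O}_{\Gamma^*} x = Y_{\Gamma^*}$. Thus the set $\myset{X}^{0,\Gamma^*} = \{x : \mathcal{O}_{\Gamma^*} x = Y_{\Gamma^*}\}$ is nonempty, and being the solution set of a consistent linear system it equals the affine subspace $x_{\textup{true}} + \ker(\mathcal{O}_{\Gamma^*})$. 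Since $\ker(\mathcal{O}_{\Gamma^*}) \neq \{0\}$ and the field is $\mathbb{R}$, this affine subspace contains infinitely many points.

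Finally I would invoke Proposition~\ref{prop:state uncertainty X1}: as $\Gamma^* \in \mathbb{C}_{p}^{p-s}$, the set $\myset{X}^{0,\Gamma^*}$ appears as one of the terms in the union characterizing $\myset{X}^0$, so $\myset{X}^{0,\Gamma^*} \subseteq \myset{X}^0$. An infinite subset forces $\myset{X}^0$ to be infinite, which completes the argument.

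The construction is essentially immediate once the equivalence ``non-observability $\Leftrightarrow$ nontrivial kernel'' is in place, so I expect the only point requiring care to be the justification that $\mathcal{O}_{\Gamma^*}$ has a nontrivial kernel despite being built from powers up to $A^{t}$ rather than $A^{n-1}$; this is precisely where the standing assumption $t \geq n-1$ together with Cayley--Hamilton is needed. Notably, no nonzero attack is actually required---the mere lack of observational redundancy already conceals a continuum of initial states---so there is no genuine adversarial optimization to perform here.
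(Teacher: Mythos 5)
Your proof is correct and follows essentially the same route as the paper: both identify a combination $\Gamma^*\in\mathbb{C}_p^{p-s}$ for which $\mathcal{O}_{\Gamma^*}$ fails to have full column rank, arrange for the sensors in $\Gamma^*$ to be unattacked so that $\myset{X}^{0,\Gamma^*}$ is a nontrivial affine subspace, and conclude via the union in Proposition~\ref{prop:state uncertainty X1}. Your version is a little more explicit than the paper's (the zero-attack choice and the Cayley--Hamilton justification that the kernel of the extended observability matrix matches that of the truncated one), but the argument is the same.
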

\begin{proof}
    Since $(A,C)$ is not $s$-sparse observable, we know $\mathcal{O}_\Gamma$ is not full column rank for some $\Gamma \in \mathbb{C}_p^{p-s}$. Denote one such index set by $\Gamma_0$. Suppose that sensors in the index set $\Gamma_0$ are intact, then $\myset{X}^{0,\Gamma_0}\subset \mathbb{R}^n$ is an affine subspace. From \eqref{eq:X0set}, $\myset{X}^0 $ is thus an infinite set.
\end{proof}

We now restrict our discussion to the case when the system is $s$-sparse observable {and derive conditions on $H$ and $q$ for system safety for all possible attacks}. When this condition fails, the sensor attack is undetectable\footnote{By attack detection, we mean that whether, for any attacks on $s$ sensors, the system can tell apart the case where every sensors are intact and the case where there are sensor attacks in the system. An attack detection  on $s$ sensors is possible  if and only if the system is $s$-sparse observable. See \cite[Theorem 16.1]{diggavi2020coding}  for more details. }.

\begin{thm} \label{thm:offline design}
    Consider system \eqref{eq:system} with $s$ sensors under attack. Assume that the system is $s$-sparse observable and $p>2s$. If the following conditions hold:
    \begin{enumerate}
        \item[i.]  for any $\Lambda\in \mathbb{C}_{p}^{p-2s}$, $\ker(\mathcal{O}_{\Lambda}) \subseteq \ker\left(\begin{bsmallmatrix}
           H \\
           HA \\
           \vdots \\
           HA^{n-1}
       \end{bsmallmatrix}\right)$;
       \item[ii.] $h(x) = Hx + q$ is a control barrier function.
    \end{enumerate}
       Then there exist maps $k_t:\mathbb{R}^{mt + p(t+1)} \to \mathbb{R}^m$ such that the system under the feedback $u(t) = k_t(\mathcal{D}_t)$  satisfies the following implications: 
       \begin{equation} \label{eq:safety_guarantee}
           \begin{aligned}
               & x(n-1)\in \mathcal{C} \implies  x(t)\in \mathcal{C},~\forall t\geq n, \\
               & x(n-1)\notin \mathcal{C} \implies  \lim_{\tau \to \infty} \| x(\tau) \|_{\mathcal{C}} \to 0.
           \end{aligned}
       \end{equation}
\end{thm}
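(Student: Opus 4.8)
The plan is to show that hypotheses (i)--(ii) allow a feedback law, which only ever sees the (possibly corrupted) data $\mathcal{D}_t$, to enforce the CBF condition \eqref{eq:cbf_condition} along the \emph{true} state trajectory even though the true state is never known exactly; the two implications in \eqref{eq:safety_guarantee} then reduce to the standard discrete-time CBF invariance-and-stability result of \cite{Ames2017,agrawal2017discrete}.

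The first and key step is to prove that the barrier function takes the same value at every plausible state. Fix $\tau\geq n-1$, let $x(\tau)$ denote the true state (with $x(0)=x_{\textup{true}}$), and let $\hat{x}(\tau)\in\myset{X}^{\tau}$ be any plausible state obtained by forward-propagating a nonempty singleton $\myset{X}^{0,\Gamma}$ as in Corollary \ref{col:X_t set} (such a set exists because $x_{\textup{true}}$ is itself plausible). Writing $\hat{x}_0$ for the associated plausible initial state, Proposition \ref{prop:general sparse case} gives $\hat{x}_0 - x_{\textup{true}}\in\bigcup_{\Lambda\in\mathbb{C}_{p}^{p-2s}}\ker(\mathcal{O}_{\Lambda})$, so hypothesis (i) yields $HA^{k}(\hat{x}_0-x_{\textup{true}})=0$ for $k=0,\dots,n-1$, and hence, by the Cayley--Hamilton theorem, for all $k\geq 0$. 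Because the applied inputs are common to both trajectories, $\hat{x}(\tau)-x(\tau)=A^{\tau}(\hat{x}_0-x_{\textup{true}})$, so $HA^{j}(\hat{x}(\tau)-x(\tau))=0$ for every $j\geq 0$; in particular $H\hat{x}(\tau)=Hx(\tau)$ and $HA\hat{x}(\tau)=HAx(\tau)$. Thus the attacker cannot create any ambiguity in the quantities $Hx$ and $HAx$ on which the CBF condition depends.

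Next I would construct the feedback. For $t\geq n-1$, define $k_t(\mathcal{D}_t)$ to compute a plausible estimate $\hat{x}(t)$ as above and to return any $u$ with $H(A\hat{x}(t)+Bu)+q\geq(1-\gamma)(H\hat{x}(t)+q)$, which exists because $h$ is a control barrier function by hypothesis (ii). Substituting $H\hat{x}(t)=Hx(t)$ and $HA\hat{x}(t)=HAx(t)$ from the previous step replaces every term by its true-state counterpart, giving $H(Ax(t)+Bu(t))+q\geq(1-\gamma)(Hx(t)+q)$, that is, $h(x(t+1))\geq(1-\gamma)h(x(t))$ componentwise along the actual trajectory for all $t\geq n-1$.

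Finally, iterating this inequality from $t=n-1$ gives $h(x(t))\geq(1-\gamma)^{t-(n-1)}h(x(n-1))$ componentwise. If $x(n-1)\in\mathcal{C}$ then $h(x(n-1))\geq 0$, hence $h(x(t))\geq 0$ and $x(t)\in\mathcal{C}$ for all $t\geq n$. If $x(n-1)\notin\mathcal{C}$, then each negative component of $h(x(n-1))$ is scaled by $(1-\gamma)^{t-(n-1)}\to 0$, so the vector of constraint violations vanishes and a Hoffman error bound for the polyhedron $\mathcal{C}$ (equivalently the asymptotic-stability guarantee of \cite{Ames2017,agrawal2017discrete}) gives $\|x(\tau)\|_{\mathcal{C}}\to 0$. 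I expect the first step to be the crux: once the state ambiguity is confined to the kernel in hypothesis (i) and is therefore invisible to both $H$ and $HA$, the remainder is the usual CBF argument applied to an estimate that is provably indistinguishable from the true state at the level of the barrier.
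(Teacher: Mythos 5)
Your proposal is correct and follows essentially the same route as the paper's proof: both use Proposition \ref{prop:general sparse case} to confine the initial-state ambiguity to $\bigcup_{\Lambda}\ker(\mathcal{O}_{\Lambda})$, invoke condition (i) together with the Cayley--Hamilton theorem to show that $HA^{j}$ annihilates this ambiguity for all $j\geq 0$ (so the CBF inequality at one plausible state is equivalent to the CBF inequality at the true state, reducing the robust condition to a nominal one solvable by condition (ii)), and then conclude via standard discrete-time CBF theory. Your explicit iteration of $h(x(t))\geq(1-\gamma)^{t-(n-1)}h(x(n-1))$ and the Hoffman-bound remark are just a spelled-out version of the citation the paper uses for the final step.
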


\begin{proof}

    We will show that at each time $t\geq n-1$, $u(t)$ can be found to satisfy the CBF condition for all plausible states:
\begin{equation} \label{eq:cbf_cond_naive}
    H(Ax+Bu(t)) + q \geq (1-\gamma) ( Hx + q),~\forall x\in \myset{X}^{t}.
\end{equation}
Using \eqref{eq:general X0} and Corollary \ref{col:X_t set} to conservatively bound $\myset{X}^t$, it is sufficient to show that $u(t)$ satisfies the following condition:
\begin{multline} \label{eq:cbf_condition_offline1}
    HBu(t)+ H(A - (1-\gamma)I)(x + A^{t}\cup_{\Lambda\in \mathbb{C}_p^{p-2s}} \ker(\mathcal{O}_{\Lambda})) \\+ \gamma q \subseteq \mathbb{R}^l_{\geq 0} 
\end{multline}
for an arbitrary $x\in \myset{X}^{t}$. Under Condition i, we know for any $\Lambda\in \mathbb{C}_{p}^{p-s}$, any $v\in \ker(\mathcal{O}_{\Lambda})$, $Hv = HAv = ... = HA^{n-1} v = 0$. Using the  Cayley–Hamilton Theorem \cite[Theorem 6.1]{hespanha2018linear}, we further deduce that $HA^\tau v=0$ for any $\tau\geq n$. Thus, the condition simplifies to:
\begin{equation*}
\begin{aligned} 
     HBu(t)+ H(A - (1-\gamma)I)x  + \gamma q \geq 0
\end{aligned}
\end{equation*}
for an arbitrary $x\in \myset{X}^{t}$ (as opposed to for all $x\in \myset{X}^{t}$ as in \eqref{eq:cbf_cond_naive}). The existence of such an input $u$ is guaranteed by Condition ii. Thus, the feedback map $k_t$ can be given by, for example, solving the following quadratic program
\begin{equation*}
    \begin{aligned}
        & k_t( \mathcal{D}_t) =  \argmin_u \| u\|^2 \\
        & \hspace{10mm} \textup{s.t. } HBu+ H(A - (1-\gamma)I)x  + \gamma q \geq 0,
    \end{aligned}
\end{equation*}
where the state $x$ is an arbitrary plausible state in $\myset{X}^{t}$ given the input-output data $\mathcal{D}_t$. 
The properties in \eqref{eq:safety_guarantee} thus follow from CBF theory\cite{Ames2017,agrawal2017discrete}.\end{proof}

Theorem~\ref{thm:offline design} provides conditions on $H$ and $q$ such that the system can be rendered safe. An example using these conditions to design a safe set is given later.

We note that while reasoning about system safety offline is useful, the proposed conditions might be too pessimistic in practice since we consider the worst-case attacking scenario. Theoretically, under the $s$-sparse observability condition,  in order to guarantee the system safety starting from $\myset{X}^0$, which is a finite set, we have to take into account the set $\{x_{\textup{true}} \} + \bigcup_{\Lambda\in \mathbb{C}_{p}^{p-2s}} \ker(\mathcal{O}_{\Lambda})$, which is an infinite set in general. Practically, when considering a robotic system, we argue it is not restrictive to assume that the robot starts in a safe region (though precision location may be unknown), safely collects some input-output data (e.g., by staying still at its current position), and {starts to carry out its desired task after enough measurement data acquisition}. In the following subsection, {we presume that measurements of the first few time steps are available and discuss safety conditions therein}.

\subsection{Online phase safe control design}
In this section we derive conditions on the system safety after collecting the input-output data for the first $t\geq n$ steps. Recall that the state uncertainty with the past input-output data is characterized in Proposition \ref{prop:state uncertainty X1}. We note that $s$-sparse observability
is no longer required in this subsection. 

To simplify our analysis, we classify different types of combinations $\Gamma$ based on the dimensions of $\myset{X}^{0,\Gamma}$. Specifically, let 
$$\mathbb{C}_{p}^{p-s} = \mathbb{C}_{\emptyset} \cup \mathbb{C}_{pt} \cup  \mathbb{C}_{sb},$$ where $\mathbb{C}_{\emptyset}, \mathbb{C}_{pt},  \mathbb{C}_{sb} $ are the sets of combinations that correspond to $\myset{X}^{0,\Gamma}$ being an empty set, a singleton, and an affine subspace in $\mathbb{R}^n$, respectively. Let $x^{0,\Gamma}$ 
 satisfy $\mathcal{O}_{\Gamma} x^{0,\Gamma}  = Y_{\Gamma}$ for $\Gamma \in \mathbb{C}_{pt} \cup  \mathbb{C}_{sb}$. Then, from \eqref{eq:X_t_Gamma}, we have 
\begin{equation} \label{eq:explicit x_t_Gamma}
\begin{aligned}
       &  \myset{X}^{t,\Gamma} = \{x^{t,\Gamma} \} & \text{for } \Gamma \in \mathbb{C}_{pt}, \\
       & \myset{X}^{t,\Gamma} = \{x^{t,\Gamma}\}  + A^t (\ker(\mathcal{O}_{\Gamma}))  & \text{for } \Gamma \in \mathbb{C}_{sb},
\end{aligned}
\end{equation}
with $x^{t,\Gamma} = A^t x^{0,\Gamma} + A^{t-1}Bu(0) + A^{t-2}Bu(1) + ... + Bu(t-1)  $ propagated from $x^{0,\Gamma}$.

\begin{lemma} \label{lem:set inclusion}
The set
 $\myset{X}^t \subseteq \myset{C}$ is contained within the safe set if and only if the following conditions hold:
 \begin{enumerate}[i.]
     \item $\ker(\mathcal{O}_{\Gamma})\subseteq \ker(H A^t )$ for all $\Gamma \in  \mathbb{C}_{sb}$;
     \item there exists $x^{t,\Gamma}\in \myset{X}^{t,\Gamma}$ satisfying $Hx^{t,\Gamma} +q \geq 0$ for each $\Gamma \in  \mathbb{C}_{pt} \cup \mathbb{C}_{sb}$.
 \end{enumerate}
\end{lemma}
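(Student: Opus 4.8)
The plan is to exploit the union decomposition from Corollary~\ref{col:X_t set}. Since $\myset{X}^t = \bigcup_{\Gamma \in \mathbb{C}_p^{p-s}} \myset{X}^{t,\Gamma}$ and a union is contained in $\myset{C}$ if and only if every member is, I would first reduce the claim to establishing $\myset{X}^{t,\Gamma} \subseteq \myset{C}$ for each $\Gamma$. The combinations in $\mathbb{C}_{\emptyset}$ yield empty sets and are vacuously contained, so only $\Gamma \in \mathbb{C}_{pt} \cup \mathbb{C}_{sb}$ are relevant. This immediately aligns the statement with the three-way partition of $\mathbb{C}_p^{p-s}$ introduced just before the lemma, and lets me treat the singleton and affine-subspace cases independently using the explicit forms in \eqref{eq:explicit x_t_Gamma}.

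For $\Gamma \in \mathbb{C}_{pt}$, the set $\myset{X}^{t,\Gamma} = \{x^{t,\Gamma}\}$ is a singleton, so containment is equivalent to $H x^{t,\Gamma} + q \geq 0$, which is exactly the instance of Condition ii for that $\Gamma$ (here the existential quantifier is trivially realized by the unique point). For $\Gamma \in \mathbb{C}_{sb}$, the set is the affine subspace $\{x^{t,\Gamma}\} + A^t(\ker(\mathcal{O}_{\Gamma}))$, so $\myset{X}^{t,\Gamma} \subseteq \myset{C}$ means $H x^{t,\Gamma} + q + H A^t v \geq 0$ for every $v \in \ker(\mathcal{O}_{\Gamma})$.

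The crux of the argument, and the step I expect to be the main obstacle, is showing that for $\Gamma \in \mathbb{C}_{sb}$ this inequality forces $H A^t v = 0$ on the whole kernel, i.e., Condition i. The key observation is that $\ker(\mathcal{O}_{\Gamma})$ is a linear subspace, hence closed under negation and scaling. If some coordinate of $H A^t v$ were strictly negative for a given $v$, then replacing $v$ by $cv$ and letting $c \to +\infty$ would drive the corresponding coordinate of $H x^{t,\Gamma} + q + H A^t(cv)$ to $-\infty$, violating the inequality; applying the same reasoning to $-v$ rules out any nonzero coordinate. This is essentially a recession-cone argument: an unbounded affine subspace fits inside the polyhedron $\myset{C} = \{x : Hx + q \geq 0\}$ only if its direction space lies in $\ker(H)$, that is, $A^t(\ker(\mathcal{O}_{\Gamma})) \subseteq \ker(H)$, equivalently $\ker(\mathcal{O}_{\Gamma}) \subseteq \ker(H A^t)$, which is precisely Condition i.

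Finally, I would close the equivalence. Once Condition i holds for $\Gamma \in \mathbb{C}_{sb}$, we have $H A^t v = 0$, so $H(x^{t,\Gamma} + A^t v) + q = H x^{t,\Gamma} + q$ for every point of the subspace; the value of $h$ is therefore constant along $\myset{X}^{t,\Gamma}$, and containment reduces to $H x^{t,\Gamma} + q \geq 0$, which is Condition ii for that $\Gamma$ (and the existential and universal forms coincide because the value is constant). Conversely, Conditions i and ii together yield $\myset{X}^{t,\Gamma} \subseteq \myset{C}$ for each $\Gamma \in \mathbb{C}_{pt} \cup \mathbb{C}_{sb}$, hence $\myset{X}^t \subseteq \myset{C}$. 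Collecting the singleton and affine-subspace cases then gives the stated \emph{if and only if}.
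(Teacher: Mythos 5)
Your proposal is correct and follows essentially the same route as the paper: reduce to per-$\Gamma$ containment via the union decomposition, handle singletons directly, and for $\Gamma\in\mathbb{C}_{sb}$ use the unboundedness of the affine subspace (your scaling of $v$ is the paper's choice of $k$ making $[kHA^tv+Hx(t)+q]_i<0$) to force $A^t(\ker(\mathcal{O}_\Gamma))\subseteq\ker(H)$. Your added remark that Condition i makes $h$ constant on $\myset{X}^{t,\Gamma}$, so the existential and universal forms of Condition ii coincide, is a helpful clarification but not a different argument.
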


\begin{proof}
   \textit{Necessity:}  
   Suppose there exists $\Gamma\in\mathbb{C}_{sb}$ such that
   $\ker(\mathcal{O}_{\Gamma}) \not\subseteq \ker(H A^t )$, then there exists $v\in\ker(\mathcal{O}_{\Gamma})$ such that $HA^tv \neq 0$. However, in view of \eqref{eq:explicit x_t_Gamma}, $x(t)+kA^tv\in\myset{X}^t$, for any $k\in\mathbb{R}$, is a plausible state based on data, and a $k$ exists such that $[kHA^tv+Hx(t)+q]_i \not\geq 0$ for some row $i$, i.e.,
   $\myset{X}^t \not\subseteq \myset{C}$. This contradiction proves $\ker(\mathcal{O}_{\Gamma})\subseteq \ker(H A^t )$ for all $\Gamma\in\mathbb{C}_{sb}$. The second condition follows directly from the definition of the safe set.

   \textit{Sufficiency:} 
   We must show that all $x\in\myset{X}^{t,\Gamma}$ for which $x \neq x^{t,\Gamma}$ are also in the safe set, for $\Gamma\in\mathbb{C}_{sb}$ when $\myset{X}^{t,\Gamma}$ is not a singleton.
   For such a state $x$, one calculates that $h(x) - h(x^{t,\Gamma}) = H(x-x^{t,\Gamma})=HA^tv$ for some $v\in \ker(\mathcal{O}_\Gamma)$, using \eqref{eq:explicit x_t_Gamma}. Based on the first condition, $h(x) - h(x^{t,\Gamma})= 0$. Thus $x$ also belongs to $\myset{C}$. This completes the proof.
\end{proof}

We are now in place to derive the safety condition on $u(t)$. The naive discrete-time CBF condition on $u$ is
\begin{equation} \label{eq:cbf_cond_naive2}
    H(Ax+Bu(t)) + q \geq (1-\gamma) ( Hx + q), \forall x\in \myset{X}^{t},
\end{equation}
which may consist of infinitely many linear constraints, making it intractable to solve using the standard quadratic-program-based feedback design.

\begin{thm} \label{thm:cbf_condition}
    Under the premises of Proposition \ref{prop:state uncertainty X1}, the CBF condition in \eqref{eq:cbf_cond_naive2} is equivalent to:
    \begin{align}
    & \ker(\mathcal{O}_{\Gamma}) \subseteq \ker(HA^{t+1} - (1-\gamma)HA^t),~\forall \Gamma \in \mathbb{C}_{sb},  \label{eq:cbf_nullspace_cond}
    \end{align}
    and the existence of a plausible state $x^{t,\Gamma}\in\myset{X}^{t,\Gamma}$ such that there exists an input $u\in \mathbb{R}^m$ that satisfies:
    \begin{align}
    & H(Ax^{t,\Gamma}+Bu) + q \geq (1-\gamma) ( Hx^{t,\Gamma} + q),  
    \label{eq:cbf_multi_cond}
    \end{align}
    for each $\Gamma \in \mathbb{C}_{pt}\cup \mathbb{C}_{sb}$.
    Moreover, when $\myset{X}^0, ..., \myset{X}^{n-1} \subseteq \myset{C}$, the condition in \eqref{eq:cbf_nullspace_cond} holds. 
\end{thm}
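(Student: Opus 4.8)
The plan is to prove the equivalence by decomposing the plausible-state set via Corollary~\ref{col:X_t set} into $\myset{X}^t = \bigcup_{\Gamma\in\mathbb{C}_p^{p-s}}\myset{X}^{t,\Gamma}$, so that \eqref{eq:cbf_cond_naive2} holds exactly when the CBF inequality holds on every slice $\myset{X}^{t,\Gamma}$. Combinations $\Gamma\in\mathbb{C}_\emptyset$ give an empty slice and impose no constraint; for $\Gamma\in\mathbb{C}_{pt}$ the slice is the singleton $\{x^{t,\Gamma}\}$, so the slice-wise requirement is precisely \eqref{eq:cbf_multi_cond}. The whole content therefore sits in the affine case $\Gamma\in\mathbb{C}_{sb}$.

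For $\Gamma\in\mathbb{C}_{sb}$, I would parametrize a generic plausible state through \eqref{eq:explicit x_t_Gamma} as $x = x^{t,\Gamma} + A^t v$ with $v\in\ker(\mathcal{O}_\Gamma)$, substitute into \eqref{eq:cbf_cond_naive2}, and collect terms so the inequality reads
\[
\big[H(Ax^{t,\Gamma}+Bu)+q-(1-\gamma)(Hx^{t,\Gamma}+q)\big] + \big(HA^{t+1}-(1-\gamma)HA^t\big)v \ge 0
\]
for all $v\in\ker(\mathcal{O}_\Gamma)$. Because $\ker(\mathcal{O}_\Gamma)$ is a linear subspace closed under arbitrary scaling and sign change, the same unbounded-direction reasoning used in the proof of Lemma~\ref{lem:set inclusion} forces the linear term to vanish identically, giving $(HA^{t+1}-(1-\gamma)HA^t)v=0$ for every $v\in\ker(\mathcal{O}_\Gamma)$, which is exactly \eqref{eq:cbf_nullspace_cond}; the remaining constant part is exactly \eqref{eq:cbf_multi_cond}, and the converse direction just reverses these steps. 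A point I would make explicit is that, once \eqref{eq:cbf_nullspace_cond} holds, replacing $x^{t,\Gamma}$ by any other plausible state $x^{t,\Gamma}+A^t v$ shifts both sides of \eqref{eq:cbf_multi_cond} by the equal amounts $HA^{t+1}v$ and $(1-\gamma)HA^t v$; hence the condition is independent of the chosen representative, which is what justifies phrasing the result through the \emph{existence} of a plausible $x^{t,\Gamma}$. Taking the conjunction over all $\Gamma$ with one common $u$ then yields the stated equivalence.

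For the final ``moreover'' claim I would invoke the necessity argument of Lemma~\ref{lem:set inclusion}, which is time-agnostic, at each $\tau\in\{0,1,\dots,n-1\}$: since $\myset{X}^\tau\subseteq\myset{C}$ and $\myset{X}^{\tau,\Gamma}=\{x^{\tau,\Gamma}\}+A^\tau(\ker(\mathcal{O}_\Gamma))$, the affine slice can lie inside the polytope $\myset{C}$ only if $\ker(\mathcal{O}_\Gamma)\subseteq\ker(HA^\tau)$ for every $\Gamma\in\mathbb{C}_{sb}$. Thus any $v\in\ker(\mathcal{O}_\Gamma)$ satisfies $Hv=HAv=\dots=HA^{n-1}v=0$, and by the Cayley--Hamilton theorem $A^\tau$ is a polynomial in $I,A,\dots,A^{n-1}$ for all $\tau\ge0$, so $HA^\tau v=0$ for every $\tau$; in particular $HA^{t+1}v = HA^t v = 0$, which delivers \eqref{eq:cbf_nullspace_cond}.

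I expect the main obstacle to be the affine/subspace step for $\Gamma\in\mathbb{C}_{sb}$: one must argue rigorously that an inequality required to hold over an entire unbounded linear subspace forces the associated linear functional to be identically zero, and then check that this cleanly decouples the structural condition \eqref{eq:cbf_nullspace_cond} from the representative-dependent inequality \eqref{eq:cbf_multi_cond}, so that representative independence genuinely holds. The Cayley--Hamilton extension in the last part is routine by comparison.
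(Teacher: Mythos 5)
Your proposal is correct and follows essentially the same route as the paper: decompose $\myset{X}^t$ via Corollary~\ref{col:X_t set} into singleton and affine slices, use the unbounded-subspace argument from Lemma~\ref{lem:set inclusion} to force the linear term $(HA^{t+1}-(1-\gamma)HA^t)v$ to vanish on each $\ker(\mathcal{O}_\Gamma)$, and apply Lemma~\ref{lem:set inclusion} plus Cayley--Hamilton for the final claim. You actually spell out two points the paper leaves implicit (the explicit cancellation of the linear term and the independence of \eqref{eq:cbf_multi_cond} from the choice of representative $x^{t,\Gamma}$ once \eqref{eq:cbf_nullspace_cond} holds), which is a welcome addition rather than a deviation.
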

\begin{proof}
    In view of Corollary \ref{col:X_t set} and \eqref{eq:explicit x_t_Gamma}, we rewrite the condition in \eqref{eq:cbf_cond_naive2} as
\begin{subequations} \label{eq:CBF_condition_intermediate}
\begin{align} 
    & HBu(t)+ H(A - (1-\gamma)I)x + \gamma q \geq 0,  \nonumber\\
    & \hspace{4,5cm}  \forall x\in \myset{X}^{t,\Gamma}, \Gamma\in \mathbb{C}_{pt}. \label{eq:CBF_condition_intermediate_pt}\\
    & HBu(t)+ H(A - (1-\gamma)I)\myset{X}^{t,\Gamma} + \gamma q \subseteq \mathbb{R}^l_{\geq 0},  \nonumber\\
    & \hspace{6cm} \forall \Gamma\in \mathbb{C}_{sb}.\label{eq:CBF_condition_intermediate_sb}
\end{align}
\end{subequations}
 For $\Gamma\in\mathbb{C}_{pt}$, the set $\myset{X}^{t,\Gamma}$ is a singleton, and   \eqref{eq:CBF_condition_intermediate_pt} is equivalent to \eqref{eq:cbf_multi_cond}. 
 For $\Gamma \in \mathbb{C}_{sb}$,
 one can verify that \eqref{eq:CBF_condition_intermediate_sb} holds if and only if conditions \eqref{eq:cbf_nullspace_cond}
 and \eqref{eq:cbf_multi_cond} hold by following a proof similar to that of Lemma \ref{lem:set inclusion}.

From Lemma \ref{lem:set inclusion}, $\myset{X}^0, ..., \myset{X}^{n-1} \subseteq \myset{C}$ implies $ \ker(\mathcal{O}_{\Gamma})\subseteq \ker(H A^\tau )$ for $\tau = 0,1,...,n-1$ and for any $ \Gamma \in  \mathbb{C}_{sb}$, i.e.,  $Hv = HAv = ... = HA^{n-1}v = 0$ for any $v\in \ker(\mathcal{O}_{\Gamma})$.
From Cayley–Hamilton Theorem \cite[Theorem 6.1]{hespanha2018linear}, we further know $ HA^{\tau}v = 0$ for $\tau \geq n$. Thus, the condition in \eqref{eq:cbf_nullspace_cond} is satisfied.
\end{proof}

\begin{remark}
   The condition on $u(t)$ in \eqref{eq:cbf_multi_cond} is finite and linear, and thus can be implemented as constraints in quadratic program-based control designs.
\end{remark}

We now have the following results regarding system safety.

\begin{thm} \label{thm:safety guarantee_with data}
Under the premises of Proposition \ref{prop:state uncertainty X1}, if $\myset{X}^{0},\dots,\myset{X}^{n-1}\subseteq \myset{C}$, and if the condition in \eqref{eq:cbf_multi_cond} holds at each time step $\tau \geq t$ with a control sequence $\{u(\tau)\}_{\tau\geq t}$,
then the two implications in \eqref{eq:safety_guarantee} hold for the system.
\end{thm}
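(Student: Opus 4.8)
The plan is to reduce the claim to the standard discrete-time CBF invariance/stability argument applied to the \emph{true} trajectory, with Theorem~\ref{thm:cbf_condition} doing the heavy lifting of turning the finitely many constraints \eqref{eq:cbf_multi_cond} into the full robust condition \eqref{eq:cbf_cond_naive2}. First I would record that the true state is a plausible state at every instant: the genuine initial condition satisfies \eqref{eq:measurement_matrix_eq} under the realized attack, which corrupts at most $s$ sensors by Assumption~\ref{ass:attacking model}, so it belongs to $\myset{X}^0$ by Proposition~\ref{prop:state uncertainty X1}, and by Corollary~\ref{col:X_t set} its forward images satisfy $x(\tau)\in\myset{X}^\tau$ for all $\tau$.

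Next I would apply Theorem~\ref{thm:cbf_condition} at each step $\tau\geq t$. The hypothesis $\myset{X}^0,\dots,\myset{X}^{n-1}\subseteq\myset{C}$ triggers its final assertion: through Lemma~\ref{lem:set inclusion} it gives $Hv=HAv=\dots=HA^{n-1}v=0$ for every $v\in\ker(\mathcal{O}_\Gamma)$, $\Gamma\in\mathbb{C}_{sb}$, and Cayley--Hamilton extends this to $HA^j v=0$ for all $j\geq 0$, so the nullspace condition \eqref{eq:cbf_nullspace_cond} holds at every time step. Since \eqref{eq:cbf_multi_cond} is assumed to hold at each $\tau\geq t$ through the chosen $u(\tau)$, Theorem~\ref{thm:cbf_condition} certifies the equivalent robust condition \eqref{eq:cbf_cond_naive2}, i.e.\ $H(Ax+Bu(\tau))+q\geq(1-\gamma)(Hx+q)$ for \emph{all} $x\in\myset{X}^\tau$. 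Specializing to the true state $x(\tau)\in\myset{X}^\tau$ delivers the scalar CBF inequality $h(x(\tau+1))\geq(1-\gamma)\,h(x(\tau))$ along the realized trajectory.

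From here I would close with the established discrete-time CBF results \cite{Ames2017,agrawal2017discrete}. With $\gamma\in(0,1)$, the componentwise inequality $h(x(\tau+1))\geq(1-\gamma)h(x(\tau))$ makes $\myset{C}$ forward invariant, since $h(x(\tau))\geq 0$ forces $h(x(\tau+1))\geq 0$; as the containment $\myset{X}^{n-1}\subseteq\myset{C}$ and plausibility already place $x(n-1)$ in $\myset{C}$, induction gives $x(t)\in\myset{C}$ for all $t\geq n$, the first implication of \eqref{eq:safety_guarantee} (the second holding vacuously, since $x(n-1)\notin\myset{C}$ cannot occur under these premises). Were the trajectory ever outside $\myset{C}$, the same inequality would force each violated row to satisfy $h_i(x(\tau))\geq(1-\gamma)^{\tau-t}h_i(x(t))$, so the violation decays geometrically and $\|x(\tau)\|_{\myset{C}}\to 0$ by a Hoffman-type error bound for the polytope.

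The step I expect to be the main obstacle is ensuring the nullspace condition \eqref{eq:cbf_nullspace_cond}, certified only from the finitely many containments at times $0,\dots,n-1$, persists at all future times $\tau\geq t$: as the data horizon grows the stacked matrices $\mathcal{O}_\Gamma$ acquire more rows and the partition $\mathbb{C}_\emptyset\cup\mathbb{C}_{pt}\cup\mathbb{C}_{sb}$ can change, but enlarging $\mathcal{O}_\Gamma$ only shrinks $\ker(\mathcal{O}_\Gamma)$, so $HA^j v=0$ survives on the smaller kernels and the condition is preserved. The remaining bookkeeping---matching the time indices of \eqref{eq:safety_guarantee} to the instant at which CBF enforcement begins---is routine once this persistence is in hand.
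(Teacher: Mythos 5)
Your proposal is correct and follows essentially the same route as the paper: invoke Theorem~\ref{thm:cbf_condition} (whose last assertion is triggered by $\myset{X}^{0},\dots,\myset{X}^{n-1}\subseteq\myset{C}$) to upgrade the finitely many constraints \eqref{eq:cbf_multi_cond} to the robust condition \eqref{eq:cbf_cond_naive2} on all of $\myset{X}^\tau$, then conclude via standard discrete-time CBF theory. The paper's proof is a two-sentence version of this; your added details (plausibility of the true trajectory, and persistence of the nullspace condition as the kernels shrink with growing data) are correct elaborations of steps the paper leaves implicit.
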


\begin{proof}
From Theorem \ref{thm:cbf_condition}, we know that the CBF condition \eqref{eq:cbf_cond_naive2} is fulfilled for all states in $\myset{X}^\tau$ for all $\tau\geq t$. As the CBF condition is fulfilled along the system trajectory, we thus know that the implications in \eqref{eq:safety_guarantee} hold following CBF theory \cite{Ames2017,agrawal2017discrete}.
\end{proof}

{Conditions~\eqref{eq:cbf_nullspace_cond} and~\eqref{eq:cbf_multi_cond} in Theorems~\ref{thm:cbf_condition} are equivalent to the naive CBF condition in \eqref{eq:cbf_cond_naive2}. These conditions however can be difficult to verify in certain cases.} Here, we provide alternative, albeit only sufficient, conditions that are easier to verify.

\begin{corollary}
    Sufficient conditions for the existence of an input $u\in\mathbb{R}^m$ such that conditions \eqref{eq:cbf_nullspace_cond} and \eqref{eq:cbf_multi_cond} hold are:
\begin{align}
    & \forall \Gamma \in \mathbb{C}_{sb},~\exists M_\Gamma\in\mathbb{R}^{l\times(p-s)} \text{ s.t. } H = M_\Gamma C_{\Gamma}, \label{eq:CBF_cond_1} \\
    \textup{and } &  HB\in \mathbb{R}^{l\times n} \textup{ is full row rank}. \label{eq:CBF_cond_2}
\end{align}
\end{corollary}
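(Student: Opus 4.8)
The plan is to dispatch the two required conclusions of Theorem~\ref{thm:cbf_condition} separately, using \eqref{eq:CBF_cond_1} to obtain the nullspace inclusion \eqref{eq:cbf_nullspace_cond} and \eqref{eq:CBF_cond_2} to obtain the existence of an admissible input in \eqref{eq:cbf_multi_cond}.

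First I would establish \eqref{eq:cbf_nullspace_cond}. Fix any $\Gamma \in \mathbb{C}_{sb}$ and any $v \in \ker(\mathcal{O}_{\Gamma})$. From the block structure of $\mathcal{O}_{\Gamma}$ in \eqref{eq:def_O_Gamma} and \eqref{eq:matrices}, the identity $\mathcal{O}_{\Gamma} v = 0$ is equivalent to $C_{\Gamma} A^k v = 0$ for $k = 0, 1, \dots, t$. Substituting the factorization $H = M_{\Gamma} C_{\Gamma}$ from \eqref{eq:CBF_cond_1} gives $H A^k v = M_{\Gamma} C_{\Gamma} A^k v = 0$ for $k = 0, \dots, t$, and in particular for $k = 0, \dots, n-1$ since $t \geq n-1$. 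Exactly as in the proof of Theorem~\ref{thm:cbf_condition}, the Cayley--Hamilton Theorem then extends this to $H A^k v = 0$ for all $k \geq 0$, whence $HA^t v = 0$ and $HA^{t+1} v = 0$, so $(HA^{t+1} - (1-\gamma)HA^t) v = 0$. This shows $v \in \ker(HA^{t+1} - (1-\gamma)HA^t)$ and hence \eqref{eq:cbf_nullspace_cond}.

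Next I would handle \eqref{eq:cbf_multi_cond}. For each $\Gamma \in \mathbb{C}_{pt} \cup \mathbb{C}_{sb}$, rearranging the inequality in \eqref{eq:cbf_multi_cond} converts it into the linear constraint $HBu \geq b_{\Gamma}$ with $b_{\Gamma} := -H(A - (1-\gamma)I) x^{t,\Gamma} - \gamma q$. For $\Gamma \in \mathbb{C}_{sb}$ the representative $x^{t,\Gamma}$ is determined only up to an element $A^t v$ with $v \in \ker(\mathcal{O}_{\Gamma})$; by \eqref{eq:explicit x_t_Gamma} such a shift perturbs $b_{\Gamma}$ by exactly $-(HA^{t+1} - (1-\gamma)HA^t) v$, which vanishes by the inclusion just proved. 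Hence each $b_{\Gamma}$ is well-defined independent of the chosen representative. Since $\mathbb{C}_{p}^{p-s}$ is finite, I form the componentwise maximum $w^*$ of the finitely many vectors $b_{\Gamma}$. Condition \eqref{eq:CBF_cond_2} makes the linear map $u \mapsto HBu$ surjective onto $\mathbb{R}^l$, so I may choose $u$ with $HBu = w^*$; this single input then satisfies $HBu \geq b_{\Gamma}$, i.e.\ \eqref{eq:cbf_multi_cond}, for every $\Gamma$ at once.

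The main obstacle I anticipate is the well-definedness of the constraints $b_{\Gamma}$ over the affine-subspace components: because the representative for $\Gamma \in \mathbb{C}_{sb}$ is not unique, one must verify that the right-hand side of the constraint is unaffected, which is precisely why the nullspace inclusion \eqref{eq:cbf_nullspace_cond} is proved first and reused here. Once the finitely many constraints are reduced to the single achievable target $w^*$, the surjectivity of $HB$ closes the argument routinely.
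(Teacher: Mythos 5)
Your proof is correct and follows essentially the same route as the paper's: Cayley--Hamilton applied through the factorization $H = M_\Gamma C_{\Gamma}$ yields the nullspace inclusion \eqref{eq:cbf_nullspace_cond}, and surjectivity of $HB$ against a componentwise extremum of the finitely many affine constraints yields a single feasible input for \eqref{eq:cbf_multi_cond}. Your additional check that $b_{\Gamma}$ is independent of the chosen representative $x^{t,\Gamma}$ for $\Gamma \in \mathbb{C}_{sb}$ is a small refinement the paper leaves implicit, but it does not change the argument.
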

\begin{proof}
     We first show that \eqref{eq:CBF_cond_1} implies condition \eqref{eq:cbf_nullspace_cond}. For any $v\in \ker(\mathcal{O}_{\Gamma})$, we have $\mathcal{O}_{\Gamma}v= 0$ implies $C_\Gamma v= C_\Gamma Av=\dots = C_\Gamma A^{n-1}v = 0$ by definition. Using Cayley-Hamilton Theorem \cite[Theorem 6.1]{hespanha2018linear}, we further deduce $C_\Gamma A^{t+1}v = C_\Gamma A^t v =0$.  Therefore, $(HA^{t+1} - (1-\gamma)HA^t)v = M_\Gamma C_{\Gamma} (A^{t+1} - (1-\gamma)A^t )v = 0$.
     That is, $v\in \ker(HA^{t+1} - (1-\gamma)HA^t)$. Thus the equality in \eqref{eq:cbf_nullspace_cond} holds for any $\Gamma \in \mathbb{C}_{sb}$. 

     We now show that \eqref{eq:CBF_cond_2} implies the existence of an input $u$ such that \eqref{eq:cbf_multi_cond} holds. Note that there are finite number of sets $\Gamma\in \mathbb{C}_{pt}\cup \mathbb{C}_{sb}$, i.e. $|\mathbb{C}_{pt}\cup \mathbb{C}_{sb}|=k\in\mathbb{N}$, so we seek to show there exists $u\in\mathbb{R}^m$ such that $HBu + z \geq 0$,
     where $z = (z_1, ..., z_k)$ and 
     \begin{multline*}
         z_i := \min_{\Gamma \in \mathbb{C}_{pt}\cup \mathbb{C}_{sb}}[H(A - (1-\gamma)I)A^t x^{t,\Gamma}  + \gamma q ]_{i}
     \end{multline*}
When \eqref{eq:CBF_cond_2} holds, one feasible solution to \eqref{eq:cbf_multi_cond} is  thus $u = B^\top H^\top (HB B^\top H^\top)^{-1} z$.
\end{proof}

\section{A $2$D-vehicle example}
Consider an omnidirectional vehicle on a $2$D-plane with continuous-time dynamics
\begin{equation}
    \dot{x} = \begin{bsmallmatrix}
        0 & 1 & 0 & 0 \\
        0 & -0.2 & 0 & 0 \\
        0 & 0 & 0 & 1 \\
        0 & 0 & 0 & -0.2
    \end{bsmallmatrix} x + \begin{bsmallmatrix}
        0 & 0 \\
        1 & 0 \\
        0 & 0 \\
        0 & 1 
    \end{bsmallmatrix}u
\end{equation}
where $x=(x_1,x_2,x_3,x_4), u = (u_1, u_2)$, $x_1,x_3$ are the $x,y$-axes positions, $x_2, x_4$ the respective velocities, and $u_1,u_2$ the respective acceleration-level inputs. For digital implementation of our controller, this continuous-time system is discretized in time using zero-order hold method with sampling time $0.01$s. The system output is given by $y_i = x_{\lceil i/2 \rceil} + e_i$, for $i = 1,2,\ldots, 8$,
where recall $e =  (e_1, e_2, \ldots, e_8)$ represents the attacking signal.

\vspace{2mm}
\noindent \textit{Offline safety guarantee:} One verifies that this system is $1$-sparse observable. When only $1$ sensor attack exists, following Proposition \ref{prop:general sparse case}, we know the set of plausible initial states $$\myset{X}^0 \subseteq  \{x_{\textup{true}}\} +  \text{span}\begin{bsmallmatrix}
    1\\
    0\\
    0\\
    0
\end{bsmallmatrix}  \cup \text{span}\begin{bsmallmatrix}
    0\\
    0\\
    1\\
    0
\end{bsmallmatrix}.$$ From Theorem 
\ref{thm:offline design}, one possible safe region is given by $\myset{C} = \{x: h(x) = H x + q \geq 0\}, $ where $H = M \begin{bsmallmatrix}
    0 & 1 & 0 & 0 \\
    0 & 0 & 0 & 1
\end{bsmallmatrix}$ for any $M$ with proper dimensions and $q$ such that the function $h$ is a CBF. Theorem \ref{thm:offline design} guarantees that if the system state is inside the safe set at the 4th step, then the system can be rendered safe in all possible attacking scenarios. One example choice is $\myset{C} = \{x\in \mathbb{R}^4: -4 \leq x_i\leq 4, i = 2,4\}.$

\vspace{2mm}
\noindent \textit{Online safe control:} In this case, we choose a hyper-rectangle safe set as $    \myset{C} = \{x\in \mathbb{R}^4: Hx + q\geq 0\}$ with $H = \begin{bsmallmatrix}
    I_{4\times 4} \\
    -I_{4\times 4}
\end{bsmallmatrix}$ and $q = 4\mathbf{1}_8$.
Here, sensors $1,3,5$ are under attack and {the attacker intends to drive the system out of the safe set. To achieve this target and also stay stealthy, one attacking strategy is to pick a fake initial state $x_{\textup{fake}} =  (2,2,2,1) $ and generate the attacking signal based on this fake state and consistent with system dynamics. Arguably, this is the best strategy to avoid attack detection.}
In this simulation, the actual initial state $x_{\textup{true}} = (1,1,1,1)$. Gaussian noises from $\mathcal{N}(0,0.01^2)$ are added to all sensor measurements.

For online state estimation, we take a brute-force approach by looking at all possible sensor combinations in $\mathbb{C}^{p-s}_p$ and determining whether the sensors are intact. This is done by computing the least squares solution to the linear equality $\mathcal{O}_{\Gamma}x = Y_{\Gamma}$, and empirically checking the bound of the matching error. In this simulation, we set the error threshold to $0.001$. We note that even though in the theoretical development we consider all the measurements from time step $0$ to $t$ for determining the state at time step $t$, it is not necessary nor practical to store all those measurements for secure state reconstruction. Instead, we will use measurements at $t-3, t-2, t-1, t$ to determine the plausible states at time step $t$. The plausible states at time step $t-3$ obtained from \eqref{eq:X0set} are then propagated to the plausible states at time step $t$ according to \eqref{eq:X_t_Gamma}.

\setlength{\belowcaptionskip}{-15pt}

 \begin{figure}[t]
     \centering
     \includegraphics[width=0.8\linewidth]{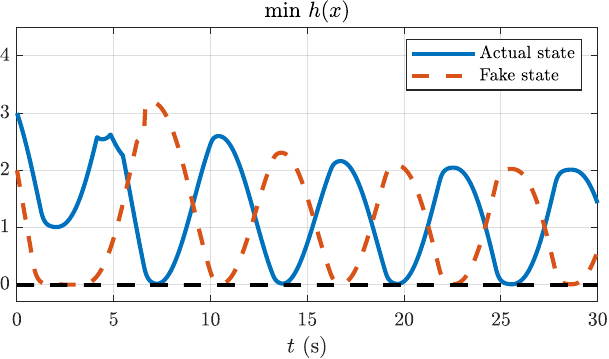}
     \caption{CBF evolution over time}
     \label{fig:cbf}
 \end{figure}
 \begin{figure}[t]
     \centering
     \includegraphics[width=0.8\linewidth]{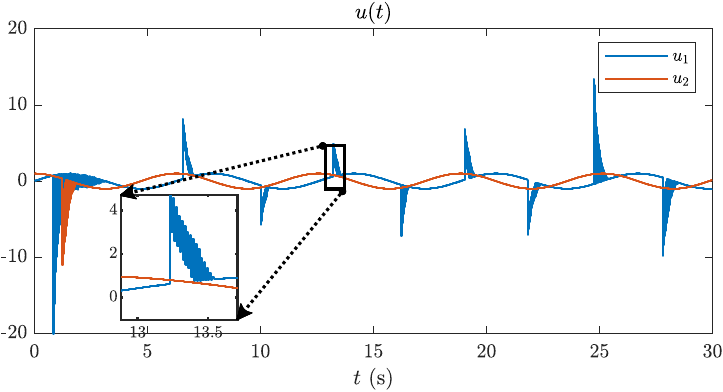}
     \caption{Safe control input over time}
     \label{fig:input}
 \end{figure}

 According to Theorem \ref{thm:cbf_condition}, we apply the following online safe controller 
\begin{equation*} \label{eq:qp_control}
    \begin{aligned}
            u(t)  &= \argmin_u \| u - u_{\text{nom}} \| \\
            & \text{s.t. \eqref{eq:cbf_multi_cond} holds,} \quad  \forall \ \Gamma \in \mathbb{C}_{pt}\cup \mathbb{C}_{sb} 
    \end{aligned}
\end{equation*}
together with a one-time checking mechanism on whether the plausible states at the first $4$ steps lie within the safe set $\myset{C}$. From Theorem \ref{thm:safety guarantee_with data}, system safety is guaranteed if this QP is always feasible. A nominal input at time step $\tau$ is chosen as $u_{\text{nom}}(\tau) = (\sin(0.01\tau),\cos(0.01\tau))$. This control signal is implemented for the first three steps and then filtered out safely by the QP, where $\gamma$ is chosen to be $0.05$.

 During implementation, the online state estimator finds $4$ plausible initial states
 $(1,1,1,1),~(1,1,2,1),$ $~(2,2,1,1),~(2,2,2,1)$. We note that the system cannot distinguish the actual state from the other three fake states.

 \begin{figure}[t]
    \centering
    \includegraphics[width=1.0\linewidth]{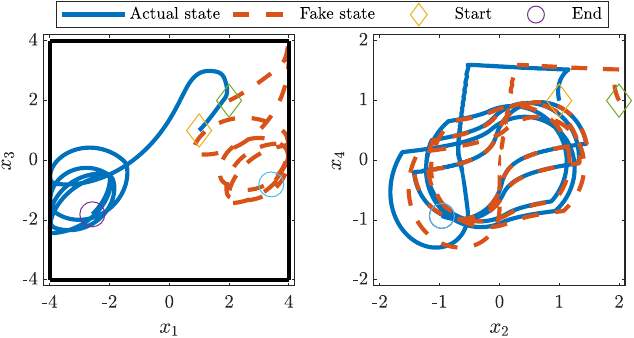}
    \caption{{Actual trajectory (blue solid line) v.s. fake trajectory (dash red line). The black square in the left subfigure represents the safe region for $x,y$-axes positions.}}
    \label{fig:traj}
\end{figure}
We observe from Fig. \ref{fig:cbf} that both the actual trajectory and the fake trajectory respect the safety constraint for all time. This, however, is not always the case. For example, when the attacker chooses $x_{\textup{fake}} =  (2,2,2,2)$, the corresponding fake trajectory does not fulfill the safety requirement. This is due to the fact that by attacking sensors $1, 3,$ and $5$, the attacker can not confuse the system about the velocity along $y$-axis $x_4$, and we can determine that $x_4 = 1 $ for all the plausible initial states $x\in \myset{X}^0$.

Figure \ref{fig:input} demonstrates that the safe control input closely resembles the nominal one, and only modifies the nominal control when necessary. One can also see from Fig. \ref{fig:traj} that although the velocity responses of the actual and the fake trajectories share some similarities, the positional movement is very different. Our proposed safe controller correctly constrains all possible trajectories inside the safe region (the square enclosed by black lines).

{
   We acknowledge that while the proposed solution works in this example, it may not scale up to large systems due to online computational burden. The main bottleneck lies in the combinatorial nature of secure state reconstruction.  
   We plan to reduce online computational costs in our future work.
}
\section{Conclusions}
In this work, we have provided conditions that guarantee safety for discrete-time linear systems under severe sensor attacks. We consider a scenario where the secure state reconstruction problem may have non-unique solutions. We first provide an exact characterization of these plausible states. We then derive conditions for designing an offline safe set, which guarantees system safety for all possible sensor attacks under a mild sparse observability condition. When the system is deployed online with measurement data for the first few steps available, we have proposed a quadratic program-based safe control scheme. We show that system safety hinges on a finite number of control barrier function conditions, and on a kernel condition that is related to the system dynamics and the number of attacked sensors. A numerical example of a $2$-D vehicle illustrates theoretical results. 

\bibliographystyle{IEEEtran}
\bibliography{IEEEabrv,references}

\end{document}